\newtheorem{thm}{Theorem}[section]
\newtheorem{prp}[thm]{Proposition}
\newtheorem{lem}[thm]{Lemma}
\newtheorem{cor}[thm]{Corollary}
\begin{document}
\title{\bf Connected max cut is polynomial for graphs without $K_5\backslash e$ as a minor}
\date{}
\maketitle
\begin{center}
\author{
{\bf Brahim Chaourar} \\
{ Department of Mathematics and Statistics,\\Al Imam Mohammad Ibn Saud Islamic University (IMSIU) \\P.O. Box
90950, Riyadh 11623,  Saudi Arabia }\\{Correspondence address: P. O. Box 287574, Riyadh 11323, Saudi Arabia}}
\end{center}

\begin{abstract}
Given a graph $G=(V, E)$, a connected cut $\delta (U)$ is the set of edges of E linking all vertices of U to all vertices of $V\backslash U$ such that the induced subgraphs $G[U]$ and $G[V\backslash U]$ are connected. Given a positive weight function $w$ defined on $E$, the connected maximum cut problem (CMAX CUT) is to find a connected cut $\Omega$ such that $w(\Omega)$ is maximum among all connected cuts. CMAX CUT is NP-hard even for planar graphs. In this paper, we prove that CMAX CUT is polynomial for graphs without $K_5\backslash e$ as a minor. We deduce a quadratic time algorithm for the minimum cut problem in the same class of graphs without computing the maximum flow.
\end{abstract}

{\bf2010 Mathematics Subject Classification:} 90C27.
\newline {\bf Key words and phrases:} max cut; connected max cut; polynomial algorithm; min cut; graphs without $K_5\backslash e$ as a minor.
\section{Introduction}

We refer to Bondy and Murty \cite{Bondy and Murty 2008} about graph theory terminolgy and facts.
\newline Given an undirected graph $G = (V, E)$ and positive weights $w_{ij} = w_{ji}$ on the edges $(i, j)\in E$, the maximum (respectively, minimum) cut problem (MAX CUT, (respectively, MIN CUT)) is that of finding the set of vertices $S$ that maximizes (respectively, minimzes) the weight of the edges in the cut $(S, V\backslash S)$ or $\delta (S)$ or $\delta (V\backslash S)$; that is, the weight of the edges linking all vertices of $S$ to those of $V\backslash S$. The (decision variant of the) MAX CUT is one of the Karp's original NP-complete problems \cite{Karp 1972}, and has long been known to be NP-complete even if the problem is unweighted; that is, if $w_{ij} = 1$ for all $(i, j)\in E$ \cite{Garey et al. 1976}. This motivates the research to solve MAX CUT in special classes of graphs. MAX CUT problem is solvable in polynomial time for the following special classes of graphs: planar graphs \cite{Barahona 1990, Hadlock 1975, Orlova and Dorfman 1972}, line graphs \cite{Guruswami 1999}, graphs with bounded treewidth, or cographs \cite{Bodlaender and Jansen 2000}. But the problem remains NP-complete for chordal graphs, undirected path graphs, split graphs, tripartite graphs, graphs that are the complement of a bipartite graph \cite{Bodlaender and Jansen 2000} and planar graphs if the weights are of arbitrary sign \cite{Terebenkov 1991}. Besides its theoretical importance, MAX CUT problem has applications in circuit layout design and statistical physics \cite{Barahona et al. 1988}. For a comprehensive survey of MAX CUT, the reader is referred to Poljak and Tuza \cite{Poljak and Tuza 1995} and Ben-Ameur et al. \cite{Ben-Ameur et al. 2014}. The best known algorithm for MAX CUT in planar graphs has running time complexity $O(n^{3/2} log n)$, where $n$ is the number of vertices of the given graph \cite{Shih et al. 1990}. The main result of this paper is to exhibit a quadratic time algorithm for a special variant of MAX CUT in graphs without the excluded minor $K_5\backslash e$.
\newline Let us give some definitions. Given an undirected graph $G = (V, E)$ and a subset of vertices $U$, a connected cut $\delta (U)$ is a cut where both induced subgraphs $G[U]$ and $G[V\backslash U]$ are connected. Special connected cuts are trivial cuts, i.e., cuts with one single vertex in one side (when this vertex is not a disconnecting vertex). The corresponding weighted variant of MAX CUT for connected cuts is called connected maximum cut problem (CMAX CUT). It is clear that MAX CUT and CMAX CUT are the same problem for complete graphs. Since MAX CUT is NP-hard for complete graphs (see \cite{Karp 1972}) then CMAX CUT is also NP-hard in the general case. Another theoretical motivation is that CMAX CUT gives a lower bound for MAX CUT.
\newline CMAX CUT has been proved NP-hard for planar graphs \cite{Haglin and Venkatesan 1991} and a linear time algorithm for series parallel graphs is presented in \cite{Chaourar 2017}. Some applications of CMAX CUT are: computing a market splitting for electricity markets \cite{Grimm et al. 2019, Kleinert and Schmidt 2018}, forest planning problems \cite{Carvajal et al. 2013}, phylogenetics \cite{Liers 2016}, image segmentation \cite{Vicente et al. 2008}, and graph coloring \cite{Hojny and Pfetsch 2018}.
\newline Let $G_1$ and $G_2$ be two graphs with $v_j$ a vertex (respectively, $e_j$ an edge) of $G_j, j = 1, 2$. The 1-sum (respectively, 2-sum) of $G_1$ and $G_2$ based on the vertices $v_j\in V(G_j)$ (respectively, edges $e_j\in E(G_j)$), $j=1, 2$, denoted $G_1\oplus_v G_2$ or $G_1\oplus G_2$ (respectively, $G_1\oplus_e G_2$ or $G_1\oplus_2 G_2$), is the graph obtained by identifying $v_1$ and $v_2$ (respectively, $e_1$ and $e_2$) on a new vertex $v$ (respectively, edge $e$), and keeping $G_j$ (respectively, $G_j\backslash e_j$), $j = 1, 2$, as they are. Moreover, we can define the 2-sum for two subsets $F_j\subseteq E(G_j)$, $j=1, 2$, as the edge set of the 2-sum of their corresponding subgraphs $(V(F_j), F_j)$, $j=1, 2$. Finally, for two classes $\mathcal X_j\subseteq 2^{E(G_j)}$, $j=1, 2$, $\mathcal X_1\oplus_2 \mathcal X_2=\{X_1\oplus_2 X_2$ such that $X_j\in \mathcal X_j$, $j=1, 2\}$.
\newline Let $\mathcal G_0$ be the class of wheels $W_n$ (where $n=|V(W_n)|\geq 4$), the prism $P_6$, $K_3$, and $K_{3, 3}$, and $\mathcal G(K_5\backslash e)$ be the class of graphs without $K_5\backslash e$ as a minor. In this paper, we prove that CMAX CUT is polynomial (time) for this class of graphs. For the best of our knowledge, this is the largest known class of graphs for which CMAX CUT is polynomial. We have the following characterization of $\mathcal G(K_5\backslash e)$ \cite{Wagner 1960}.
\begin{thm}
A graph $G\in \mathcal G(K_5\backslash e)$ if and only if $G$ is obtained by taking 1-sums and/or 2-sums of graphs of $\mathcal G_0$.
\end{thm}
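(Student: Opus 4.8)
The plan is to translate the two sum operations into the classical theory of $3$-connected components, thereby reducing the biconditional to a finite classification problem. The pivotal fact is that $K_5\backslash e$ is $3$-connected. I would first establish a transfer lemma: if $(A,B)$ is a separation of $G$ of order at most $2$, then any $3$-connected minor of $G$ — in particular $K_5\backslash e$ — is already a minor of one of the two torsos obtained by turning the separator into a clique. A $3$-connected minor cannot straddle a cut of order $\le 2$, since its branch sets would exhibit a separation of order $\le 2$ inside $K_5\backslash e$ itself, which has none. This lemma handles both directions at the level of the operations at once: $G_1\oplus G_2$ and $G_1\oplus_2 G_2$ contain $K_5\backslash e$ as a minor if and only if $G_1$ or $G_2$ does, so $1$-sums and $2$-sums preserve membership in $\mathcal G(K_5\backslash e)$.

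With the transfer lemma in hand, Tutte's decomposition theory says that every graph is built from its $3$-connected components — together with cycles (polygons) and bonds — by repeated $1$-sums and $2$-sums, and a cycle is itself a chain of $2$-sums of copies of $K_3$. The theorem therefore collapses to the core claim that the $3$-connected members of $\mathcal G(K_5\backslash e)$ are \emph{exactly} the wheels $W_n$ ($n\ge 4$), the prism $P_6$, and $K_{3,3}$.

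For the easy inclusion I would verify directly that each listed graph is $K_5\backslash e$-minor-free. Since $K_5\backslash e$ has $5$ vertices and $9$ edges, a minor model inside a $6$-vertex graph forces exactly one contraction; a short count shows the prism and $K_{3,3}$ each retain only $8$ edges after any single contraction, so neither contains the forbidden minor. For wheels I would use that $K_5\backslash e$ is the $3$-clique-sum of two copies of $K_4$ (two degree-$3$ vertices joined to a common triangle), so a wheel $W_n$ would have to host two $K_4$-minors sharing a triangle; since deleting the hub leaves merely a cycle, it cannot supply the required disjoint pieces.

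The hard direction is completeness: every $3$-connected $G\in\mathcal G(K_5\backslash e)$ is a wheel, the prism, or $K_{3,3}$. Here I would run an induction governed by Tutte's wheel theorem (or Seymour's splitter theorem relative to the wheel): any $3$-connected graph that is not a wheel admits an edge deletion or contraction yielding a smaller $3$-connected graph, and one must show that the reverse step — re-adding an edge or splitting a vertex — can remain inside $\mathcal G(K_5\backslash e)$ only if it keeps $G$ a wheel or lands on the prism or $K_{3,3}$, the latter two being terminal. I expect this trapping analysis to be the main obstacle: it is a finite but delicate case check confirming that every admissible $3$-connected enlargement beyond these graphs creates a $K_5\backslash e$ minor, so that no further $3$-connected graph avoids it.
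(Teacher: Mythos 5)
You should note at the outset that the paper itself contains no proof of this theorem: it is quoted from Wagner (1960), so your proposal has to be judged against what a complete proof would require. Your architecture is the standard modern route, and the reductions in it are sound. The transfer lemma is correct: a $3$-connected graph such as $K_5\backslash e$ cannot straddle a separation of order at most $2$, so it is a minor of $G_1\oplus G_2$ or $G_1\oplus_2 G_2$ only if it is a minor of one of the torsos, which gives preservation of $\mathcal G(K_5\backslash e)$ under both sums. Tutte's $3$-block decomposition, together with the fact that each torso (virtual edges included) is a minor of $G$, correctly reduces the converse to classifying the $3$-connected members of $\mathcal G(K_5\backslash e)$. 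Your exclusion checks are also essentially right: for the prism and $K_{3,3}$ the edge count after one contraction is at most $8<9$ (in fact $7$ when a triangle edge of the prism is contracted, which only helps), and the wheel argument can be made airtight in a cleaner form than your clique-sum heuristic: at most one branch set of a model can contain the hub, the remaining four branch sets are arcs of the rim cycle whose mutual adjacency graph is a subgraph of $C_4$, while deleting any single vertex of $K_5\backslash e$ leaves at least $5$ edges.

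The genuine gap is the completeness step, and it is not a small one: the claim that every $3$-connected graph with no $K_5\backslash e$ minor is a wheel, the prism, or $K_{3,3}$ \emph{is} Wagner's theorem — everything else in your outline is routine packaging around it. You reduce correctly to this claim and then defer it as ``a finite but delicate case check'' governed by the splitter theorem, without performing the check: which $3$-connected single-edge extensions and vertex splits of $W_n$ avoid the minor, how the prism and $K_{3,3}$ arise along the way, and why every $3$-connected one-step extension of those two creates $K_5\backslash e$ (making them terminal). As written, you have proved the easy inclusion plus a framework, not the theorem. A secondary but real issue concerns bonds: Tutte's decomposition produces bonds (two vertices joined by three or more parallel edges) among the $3$-blocks, and these are not in $\mathcal G_0$; with the paper's $2$-sum, which deletes the identified edge, bonds cannot be simulated. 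Concretely, $K_4\backslash e$ — the parallel connection of two triangles — lies in $\mathcal G(K_5\backslash e)$, yet the only edge-deleting $2$-sum of the right size, $K_3\oplus_2 K_3$, yields $C_4$, so $K_4\backslash e$ is not reachable from $\mathcal G_0$ under the operations as literally defined. To make your reduction (and indeed the statement itself) correct you must either allow the $2$-sum to retain the identified edge, so that repeated sums along one edge simulate bonds, or explicitly flag this discrepancy; your appeal to ``cycles and bonds'' passes over it silently.
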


Given a positive rational $\alpha$ and a class of graphs $\mathcal G$, we say that $\mathcal G$ is $\alpha$-polynomial for CMAX CUT (respectively, MIN CUT) if there exists a polynomial algorithm with running time complexity $O(n^{\alpha})$ which solves the considered problem for any graph $G\in \mathcal G$, where $n=|V(G)|$. In this case, we say that such a graph $G$ is $\alpha$-polynomial for the considered problem. The class of all connected cuts of a given graph $G$ is denoted by $\mathcal C(G)$. Moreover, for $e\in E(G)$, the class of connected cuts of $G$ containing $e$ is denoted by $\mathcal C_e(G)$.
\newline We can see the hardness of CMAX CUT by enumeration through the following.
\begin{prp}
\begin{align*}
|\mathcal C(K_n)|=\begin{cases}
                2^{n-1}-1 & if\> n\> is\> odd \\
            2^{n-1}-1-\frac{1}{2}{n\choose \frac{n}{2}} & if\> n\> is\> even \\
        \end{cases}
\end{align*}
\end{prp}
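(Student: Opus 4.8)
The plan is to reduce the enumeration to a purely combinatorial count by exploiting that $K_n$ contains every possible edge. First I would observe that in a complete graph every nonempty set of vertices induces a complete, hence connected, subgraph; thus the defining requirement that both $G[U]$ and $G[V\setminus U]$ be connected imposes no restriction beyond $\emptyset\neq U\neq V$. So, up to the exclusion discussed below, a connected cut is nothing more than an unordered partition $\{U,V\setminus U\}$ of $V$ into two nonempty blocks. I would also record that distinct partitions give distinct cuts: since every pair of vertices spans an edge of $K_n$, two vertices share a block exactly when the edge joining them is absent from $\delta(U)$, so the edge set $\delta(U)$ determines the partition. This makes the passage from admissible partitions to connected cuts a bijection, and the problem becomes one of enumerating those partitions.

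Next I would count the partitions. The number of subsets $U$ with $\emptyset\neq U\neq V$ is $2^{n}-2$, and since $\delta(U)=\delta(V\setminus U)$ each unordered partition is counted twice, yielding the base value $2^{n-1}-1$. For odd $n$ the two blocks of any partition necessarily have different sizes, no further correction is needed, and $|\mathcal C(K_n)|=2^{n-1}-1$. For even $n$ the delicate family is that of the balanced partitions, those with $|U|=|V\setminus U|=\tfrac n2$; these are precisely the partitions that must be removed, and their number is $\tfrac12\binom{n}{n/2}$. To obtain this count I would fix a vertex $v_1$ and note that a balanced partition is determined by the block of size $\tfrac n2$ containing $v_1$, of which there are $\binom{n-1}{\,n/2-1\,}$; Pascal's rule together with the symmetry $\binom{n-1}{\,n/2-1\,}=\binom{n-1}{\,n/2\,}$ gives $\binom{n}{n/2}=2\binom{n-1}{\,n/2-1\,}$, hence $\binom{n-1}{\,n/2-1\,}=\tfrac12\binom{n}{n/2}$. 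Subtracting this from the base value produces the stated even-case formula.

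The routine parts here are the parity bookkeeping and the Pascal identity. The main obstacle, and the step I would treat most carefully, is the even case: one must pin down, directly from the definition of a connected cut, why the balanced partitions do not contribute to $\mathcal C(K_n)$ while every unbalanced partition does. Once that status is settled, the enumeration above closes the proof. I would present the two parities in parallel, so that the common quantity $\sum_{k=1}^{\lfloor (n-1)/2\rfloor}\binom{n}{k}$ is visible as the underlying count: for odd $n$ it evaluates to $2^{n-1}-1$, and for even $n$ it is exactly $2^{n-1}-1-\tfrac12\binom{n}{n/2}$, the even formula arising precisely from the omission of the middle binomial term.
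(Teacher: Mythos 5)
Your first paragraph is correct and, taken on its own, it already settles the count completely: in $K_n$ every nonempty proper subset $U$ induces a complete, hence connected, subgraph, so \emph{every} unordered partition $\{U, V\setminus U\}$ with both blocks nonempty yields a connected cut, and your injectivity argument (two vertices lie in the same block iff their edge is absent from $\delta(U)$) shows the map from partitions to cuts is a bijection. But this proves $|\mathcal C(K_n)| = 2^{n-1}-1$ for \emph{all} $n$, with no parity correction. Your second paragraph then asserts that for even $n$ the balanced partitions ``must be removed,'' and you yourself flag the justification of this removal as ``the main obstacle'' to be treated carefully later --- and you never supply it. This is the genuine gap, and it cannot be closed: by your own first-paragraph argument, a balanced partition \emph{does} satisfy the definition of a connected cut, since both blocks induce $K_{n/2}$, which is connected. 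Concretely, in $K_4$ the cut $\delta(\{1,2\})$ has $G[\{1,2\}]\cong K_2$ and $G[\{3,4\}]\cong K_2$, both connected, and your injectivity argument shows the three balanced cuts of $K_4$ are pairwise distinct and distinct from the four trivial cuts; so the definition gives $|\mathcal C(K_4)| = 7 = 2^3-1$, whereas the claimed even-case formula gives $2^3-1-\frac{1}{2}\binom{4}{2} = 4$. In other words, the even case of the stated proposition conflicts with the paper's own definition of connected cut, so no argument along your lines (or any other) can derive it from that definition; your proposal is internally inconsistent, with paragraph one refuting the deletion step of paragraph two.

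For what it is worth, the paper states this proposition without any proof at all, so there is nothing on the paper's side to reconcile your attempt with; the arithmetic you do carry out (the factor-of-two for unordered partitions, Pascal's rule giving $\binom{n-1}{n/2-1} = \frac{1}{2}\binom{n}{n/2}$, and the identification of the even-case expression with $\sum_{k=1}^{n/2-1}\binom{n}{k}$) is all correct, but it only shows that the claimed formula counts the \emph{strictly unbalanced} partitions. A sound write-up should either prove $|\mathcal C(K_n)| = 2^{n-1}-1$ for all $n$ (which your first paragraph essentially does) and note the discrepancy with the stated proposition, or identify an alternative definition of $\mathcal C$ under which balanced cuts are genuinely excluded --- none is given in the paper.
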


The remaining of the paper is organized as follows: in section 2, we prove that the class of graphs without the excluded minor $K_5\backslash e$ is 2-polynomial for CMAX CUT and MIN CUT without computing the maximum flow for the latter problem, and we conclude in section 3.

\section{$\mathcal G(K_5\backslash e)$ is 2-polynomial for CMAX CUT and MIN CUT}

First, we state the following lemma about the class of connected cuts when taking 2-sums.
\begin{lem} \
\begin{enumerate}
\item $\mathcal C(G_1\oplus G_2)=\mathcal C(G_1)\cup\mathcal C(G_2)$.
\item $\mathcal C(G_1\oplus_2 G_2)=\mathcal C(G_1/e_1)\cup\mathcal C(G_2/e_2)\cup [\mathcal C_{e_1}(G_1)\oplus_2 \mathcal C_{e_2}(G_2)]$.
\end{enumerate}
\end{lem}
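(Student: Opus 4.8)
The plan is to prove each identity by tracking how the connectivity requirement on the two sides of a cut interacts with the single shared vertex (for the $1$-sum) or the two shared endpoints of the identified edge (for the $2$-sum). In every case I would establish the two inclusions separately, reducing the combinatorics to one elementary principle: gluing two connected graphs along a single common vertex yields a connected graph, and conversely splitting at such a vertex forces each part to be connected on its own.

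For Part 1, write $G = G_1\oplus G_2$ with the copies identified at $v$, so $v$ is a cut vertex of $G$. To see $\mathcal C(G)\subseteq \mathcal C(G_1)\cup\mathcal C(G_2)$, take a connected cut $\delta(U)$; since $v$ lies in exactly one of $U, V\setminus U$ and every $v$-avoiding path between $V(G_1)\setminus\{v\}$ and $V(G_2)\setminus\{v\}$ is blocked, the side not containing $v$ must lie entirely in $V(G_1)$ or in $V(G_2)$, say in $V(G_1)$. Then no edge of $G_2$ is cut and $\delta(U)$ restricts to a connected cut of $G_1$. For the reverse inclusion I would take a connected cut $\delta_{G_i}(A)$, orient it so that $v\in A$ (cuts are symmetric), and place all of $G_{3-i}$ on the $v$-side: gluing the connected piece $G_i[A]$ with the connected $G_{3-i}$ at the single vertex $v$ keeps that side connected, while the opposite side and the cut edge set are unchanged.

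For Part 2, write $G = G_1\oplus_2 G_2$ with $e_j = xy$ identified and deleted, so $\{x,y\}$ is a $2$-vertex separator, and split into two cases according to whether $x,y$ lie on the same side of $\delta(U)$ or on opposite sides. If they lie on the same side, the argument parallels Part 1: the opposite side lies inside a single summand, no edge of the other summand is cut, and since $x,y$ are together, contracting the non-cut edge $e_i$ identifies them so that $\delta(U)$ becomes a connected cut of $G_i/e_i$. If $x,y$ lie on opposite sides, set $A_i = U\cap V(G_i)$; each of $\delta_{G_1}(A_1), \delta_{G_2}(A_2)$ separates $x$ from $y$ and so contains $e_1$, respectively $e_2$, and the edge set of $\delta(U)$ equals $(\delta_{G_1}(A_1)\setminus\{e_1\})\cup(\delta_{G_2}(A_2)\setminus\{e_2\}) = \delta_{G_1}(A_1)\oplus_2\delta_{G_2}(A_2)$. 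Because $U = A_1\cup A_2$ shares only $x$ and $V\setminus U$ shares only $y$, connectivity of each side is equivalent to connectivity of the four induced pieces $G_i[A_i]$ and $G_i[V(G_i)\setminus A_i]$, which is exactly the statement $\delta_{G_i}(A_i)\in\mathcal C_{e_i}(G_i)$; the reverse inclusion is then immediate by regluing the connected pieces at the single vertices $x$ and $y$.

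I expect the main obstacle to be the reverse inclusion for the contraction term $\mathcal C(G_i/e_i)$. Lifting a connected cut $\delta(A)$ of $G_i/e_i$ (with the contracted vertex on the $A$-side) back to $G$ forces all of $G_{3-i}$ onto the $A$-side, and one must check that this side stays connected even though $e_i$ is \emph{absent} in $G$. This can fail if $e_i$ were a bridge, so the step relies on the summands being $2$-connected, equivalently on $x,y$ remaining joined in $G_i\setminus e_i$ or reconnected through $G_{3-i}$, a property enjoyed by every graph in $\mathcal G_0$ and preserved under $1$- and $2$-sums. Verifying this connectivity bookkeeping, rather than the routine manipulation of edge sets, is the delicate part of the argument.
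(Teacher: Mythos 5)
Your proof is correct and takes essentially the same route as the paper, whose one-line argument simply asserts the partition of $\mathcal C(G_1\oplus_2 G_2)$ according to whether the identified edge lies in the cut (equivalently, whether its endpoints fall on the same or on opposite sides); your case analysis supplies exactly the connectivity bookkeeping that this assertion compresses. Your closing observation is a refinement the paper omits: the inclusion $\mathcal C(G_j/e_j)\subseteq \mathcal C(G_1\oplus_2 G_2)$ does require that the opposite summand stay connected after deleting its identified edge (i.e., that $e_{3-j}$ is not a bridge), an implicit hypothesis that is harmless in the application because every graph in $\mathcal G_0$ is $2$-connected.
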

\begin{proof}
(1) is trivial and (2) is direct because $\mathcal C(G_1\oplus_2 G_2)=\{ \Omega_j\in \mathcal C(G_j) : e_j\notin \Omega_j, j=1, 2\}\cup \{ \Omega_1\oplus_2 \Omega_2 : \Omega_j\in \mathcal C(G_j)$ and $e_j\in \Omega_j, j = 1, 2\}$.
\end{proof}

Now we start the process for proving the main result.
\begin{lem}
Let $\alpha>0$ be a rational. Then:
\begin{enumerate}
\item $G_1\oplus G_2$ is $\alpha$-polynomial for CMAX CUT if and only if $G_j$, $j=1, 2$ are too.
\item $G_1\oplus_2 G_2$ is $\alpha$-polynomial for CMAX CUT if and only if $G_j$, $j=1, 2$ are too.
\end{enumerate}
\end{lem}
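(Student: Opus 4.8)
The plan is to leverage Lemma 2.1 directly, since it gives an exact structural decomposition of the connected cut family under both sum operations. The key observation is that $\alpha$-polynomiality for CMAX CUT means we can optimize a positive weight function over $\mathcal{C}(G)$ in time $O(n^\alpha)$, so the proof reduces to showing that maximizing over the right-hand sides of Lemma 2.1 costs no more than a constant number of subproblem optimizations plus $O(n^\alpha)$ overhead.

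For part (1), the forward direction is immediate: since $\mathcal{C}(G_1 \oplus G_2) = \mathcal{C}(G_1) \cup \mathcal{C}(G_2)$, an optimal connected cut of $G_1 \oplus G_2$ is just the better of an optimal cut of $G_1$ and an optimal cut of $G_2$, so I would run both subalgorithms and compare, giving running time $O(n_1^\alpha) + O(n_2^\alpha) = O(n^\alpha)$ where $n = n_1 + n_2 - 1$. The converse direction needs care: I must argue that if $G_1 \oplus G_2$ is $\alpha$-polynomial, then each $G_j$ is too. Here the natural move is to reduce optimization over $\mathcal{C}(G_j)$ to optimization over $\mathcal{C}(G_1 \oplus G_2)$ by choosing the weights on the ``other'' summand to be sufficiently small (or by a vertex-identification argument), so that any maximum connected cut of the 1-sum is forced to live in $G_j$.

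For part (2), the forward direction is the substantive one. Using the decomposition
\[
\mathcal{C}(G_1 \oplus_2 G_2) = \mathcal{C}(G_1/e_1) \cup \mathcal{C}(G_2/e_2) \cup \bigl[\mathcal{C}_{e_1}(G_1) \oplus_2 \mathcal{C}_{e_2}(G_2)\bigr],
\]
I would optimize over each of the three pieces separately. The first two pieces are handled by the subalgorithms on the contractions $G_j/e_j$, which are minors of $G_j$ and hence still $\alpha$-polynomial (I may need to note that $\alpha$-polynomiality is inherited by contraction, or that the hypothesis applies to the summands themselves and contraction only shrinks them). The third piece is the crux: for a cut $\Omega_1 \oplus_2 \Omega_2$ with $e_j \in \Omega_j$, the weight splits additively as $w(\Omega_1) + w(\Omega_2) - w(e)$ (the shared edge $e$ is counted once), so I want to maximize $w(\Omega_1)$ over $\mathcal{C}_{e_1}(G_1)$ and $w(\Omega_2)$ over $\mathcal{C}_{e_2}(G_2)$ independently. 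The obstacle I anticipate is precisely this constrained optimization over $\mathcal{C}_{e_j}(G_j)$, the connected cuts \emph{forced to contain} $e_j$: this is not literally a CMAX CUT instance on $G_j$, so I would force inclusion of $e_j$ by assigning it a large weight (larger than the total weight of all other edges), guaranteeing every maximum connected cut contains $e_j$, and then subtracting off the artificial contribution. Combining the three optima and taking the best gives the answer in $O(n_1^\alpha) + O(n_2^\alpha) + O(n) = O(n^\alpha)$ time. The converse for part (2) again follows by an extremal-weight reduction, setting the weights on one summand so small that the optimum is driven onto a cut of the desired form in the other summand, which lets me recover $\alpha$-polynomiality for each $G_j$ individually.
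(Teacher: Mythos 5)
Your proposal matches the paper's proof in all essentials: for the substantive direction you use the same decomposition from Lemma 2.1, solve CMAX CUT on the contractions $G_j/e_j$, and force the marker edge into the solution by assigning it a weight exceeding the total weight of all other edges --- exactly the paper's device of ``changing the weight of $e_j$ to the sum of all edge weights'' --- then combine the subproblem optima within the same $O(n^\alpha)$ budget. The only divergence is in the (easy) converse direction, where the paper asserts that $\alpha$-polynomiality is preserved under taking minors (each $G_j$ being a minor of the sum) while you sketch an extremal-weight reduction instead; both are left at the same level of detail, so this is a cosmetic rather than substantive difference.
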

\begin{proof}
It is not difficult to see that $\alpha$-polynomiality for CMAX CUT is preserved by minors. So if $G_1\oplus G_2$ (respectively, $G_1\oplus_2 G_2$) is $\alpha$-polynomial for CMAX CUT then $G_j$, $j=1, 2$ are too. Now for the inverse way, we will see the two cases separately.
\newline (1) According to the previous lemma, we need to solve two CMAX CUT problems, one in each $G_j$, $j=1, 2$. So the whole running time complexity for solving CMAX CUT in $G_1\oplus G_2$ is: $O(n_1^{\alpha}+n_2^{\alpha})=O((n_1+n_2)^{\alpha})=O(|V(G_1\oplus G_2)|^{\alpha})$, and we are done.
\newline (2) Let $w\in \mathbb{R}_+^{E(G_1\oplus_2 G_2)}$, $n_j=|V(G_j)|$, and $\Omega_j$ be a connected $w$-maximum cut containing $e_j$ in $G_j$, $j=1, 2$, among all connected cuts having the same property. It is not difficult to see that $\Omega_1\oplus_2 \Omega_2$ is a connected $w$-maximum cut in $G_1\oplus_2 G_2$. According to the previous lemma, we need to solve four CMAX CUT problems: in both $G_j/e_j$, $j=1, 2$, and in both $G_j$, $j=1, 2$, by changing the weight of $e_j$ to sum of all edges weights in order to force the corresponding solutions to contain this edge. Thus there exists an algorithm with running time complexity $O((n_1-1)^{\alpha}+(n_2-1)^{\alpha}+n_1^{\alpha}+n_2^{\alpha}+3)=O((n_1+n_2-2)^{\alpha})=O(|V(G_1\oplus_2 G_2)|^{\alpha})$ to solve CMAX CUT in $G_1\oplus_2 G_2$, and we are done.
\end{proof}

\begin{thm}
$\mathcal G_0$ is 2-polynomial for CMAX CUT.
\end{thm}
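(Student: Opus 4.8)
The plan is to split $\mathcal G_0$ into the three graphs of bounded size, namely $K_3$, $K_{3,3}$ and the prism $P_6$, and the one unbounded family, the wheels $W_n$. For the three fixed graphs there is nothing to optimize: each has at most six vertices, so one can list all bipartitions $\{U, V\setminus U\}$, discard those for which $G[U]$ or $G[V\setminus U]$ is disconnected, and return the heaviest surviving cut, all in constant time and hence trivially within the $O(n^2)$ budget. Thus the entire content of the statement is the wheel case, which I treat by first describing the connected cuts of $W_n$ explicitly and then evaluating them fast.

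For $W_n$ with hub $h$ and rim cycle $v_1, \dots, v_{n-1}$, take any connected cut and put the hub on one side, say $h\in U$. Because $h$ is joined by a spoke to every rim vertex, $G[U]$ is connected no matter which rim vertices it contains, so the only real constraint is that $G[V\setminus U]$ be connected. But $V\setminus U$ consists only of rim vertices and their induced edges are rim edges, so it is connected exactly when it is a nonempty set of consecutive rim vertices, i.e. an arc of the rim (the full rim being allowed, giving the cut $\delta(\{h\})$). Since $\delta(U)=\delta(V\setminus U)$, every connected cut is recorded once by its unique non-hub side, and that side is an arc; conversely every such arc yields a connected cut. Hence the connected cuts of $W_n$ are in bijection with the nonempty rim arcs, of which there are $O((n-1)^2)=O(n^2)$.

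It remains to evaluate all these cuts inside the same quadratic budget. For an arc $A=\{v_i, \dots, v_j\}$ (indices taken cyclically) the cut $\delta(A)$ is the union of the spokes $hv_k$ with $v_k\in A$ together with the at most two rim edges $v_{i-1}v_i$ and $v_jv_{j+1}$ that leave $A$ along the rim (and there are no such rim edges when $A$ is the whole rim). After an $O(n)$ precomputation of the prefix sums of the spoke weights around the rim, the total spoke weight of any arc is a range sum read in $O(1)$, and the two boundary rim edges are read in $O(1)$ as well, so each cut weight is obtained in constant time. Enumerating the $O(n^2)$ arcs and keeping the heaviest cut then solves CMAX CUT on $W_n$ in $O(n^2)$ time; equivalently, one may fix the first vertex of the arc and extend it one rim vertex at a time, updating the running spoke sum and the two boundary edges in $O(1)$ per step.

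I expect the combinatorial description of the connected cuts of $W_n$ as rim arcs to be the crux of the argument: once the bijection with nonempty arcs is in hand, both the count $O(n^2)$ and the prefix-sum evaluation are routine. The point needing the most care is to check that the enumeration is simultaneously complete and non-redundant, i.e. that every connected cut arises from exactly one arc; this is exactly what the observation that the hub lies on precisely one side of the cut, which then pins down the arc, is designed to guarantee.
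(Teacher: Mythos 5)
Your proposal is correct and follows essentially the same route as the paper: both arguments reduce everything to the wheel case and rest on the same bijection between connected cuts of $W_n$ and nonempty arcs of the rim (the paper's map $\varphi(P)=\delta(V(P))$ on paths of the outer cycle $C_{n-1}$), yielding $O(n^2)$ candidate cuts, with the three bounded graphs handled trivially. Your prefix-sum evaluation giving each cut weight in $O(1)$ is a welcome extra precision --- the paper only says one must select the maximum among at most $n^2$ cuts, which naively costs $O(n)$ per cut --- but it is an implementation detail rather than a different argument.
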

\begin{proof}
Let $G\in \mathcal G_0$ and $n=|V(G)|$. It suffices to prove that $|\mathcal C(G)|\leq n^2$ because, in this case, we have to find the maximum weighted element from at most $n^2$ elements.
\newline {\bf Case 1:} If $G$ is $K_3$ then $|\mathcal C(G)|=3\leq 9=n^2$.
\newline {\bf Case 2:} If $G$ is $P_6$ then $|\mathcal C(G)|=6+9+1=16\leq 36=n^2$.
\newline {\bf Case 3:} If $G$ is $K_{3, 3}$ $|\mathcal C(G)|=3(3+3+1)+3=24\leq 36=n^2$..
\newline {\bf Case 4:} Suppose now that $G$ is $W_n$. We will prove that $|\mathcal C(G)|=1+(n-2)(n-1)\leq n^2$. Let $C_{n-1}$ be the outside cycle of $W_n$, $\mathcal P_q$ be the class of simple paths of $C_{n-1}$ with $q$ vertices and $1\leq q\leq n-1$, and $\mathcal P=\bigcup_{q=1}^{n-1} \mathcal P_q$. Now let $\varphi$ be the application defined from $\mathcal P$ to $\mathcal C(G)$ such that $\varphi (P)=\delta (V(P))$. It is not difficult to see that $\varphi$ is a bijection. In the other hand, $|\mathcal P_{n-1}|=1$ and $|\mathcal P_q|=n-1$ if $1\leq q\leq n-2$. Thus $|\mathcal C(G)|=|\mathcal P|=1+(n-2)(n-1)$, and we are done.
\end{proof}

Now we can state our main result.
\begin{cor}
$\mathcal G(K_5\backslash e)$ is 2-polynomial for CMAX CUT.
\end{cor}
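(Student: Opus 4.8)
The plan is to assemble the three results already established into a single structural induction. Theorem~1.1 tells us that every $G \in \mathcal{G}(K_5\backslash e)$ is built from members of $\mathcal{G}_0$ by repeatedly taking $1$-sums and $2$-sums. Theorem~2.4 establishes the base case: every graph in $\mathcal{G}_0$ is $2$-polynomial for CMAX CUT. Lemma~2.3 supplies the inductive step, since it shows that $2$-polynomiality is preserved under both $1$-sums and $2$-sums (in both directions, though only the ``if'' direction is needed here). So the proof is essentially a one-line invocation of induction on the number of sum operations.

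First I would set up the induction on the number $k$ of $1$-sum/$2$-sum operations used to decompose $G$. For $k=0$, the graph $G$ itself lies in $\mathcal{G}_0$, and Theorem~2.4 gives the result directly. For the inductive step, I would write $G = G_1 \oplus G_2$ or $G = G_1 \oplus_2 G_2$ where each $G_i$ is built from strictly fewer summands and hence, by the induction hypothesis, is $2$-polynomial for CMAX CUT. Then Lemma~2.3 (part~1 or part~2 as appropriate, with $\alpha = 2$) immediately yields that $G$ is $2$-polynomial for CMAX CUT as well, completing the induction.

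The one point that deserves a sentence of care is the bookkeeping of the decomposition order: Theorem~1.1 guarantees a decomposition exists but an arbitrary graph may admit several, so I would phrase the induction on a fixed decomposition tree (equivalently, on the number of leaves, each a member of $\mathcal{G}_0$) rather than on $G$ itself. I expect no genuine obstacle here, because all the real work — bounding $|\mathcal{C}(G)|$ for the basic graphs, and tracking the running-time complexity across the gluing operations — has already been discharged in Theorem~2.4 and Lemma~2.3 respectively. The corollary is simply the synthesis of Theorem~1.1 with those two results.
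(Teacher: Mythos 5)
Your proposal is correct and matches the paper's argument exactly: the paper's proof is a one-line citation of Theorem~1.1, Lemma~2.2, and Theorem~2.3 (your ``Lemma~2.3'' and ``Theorem~2.4''), with the structural induction you spell out left implicit. Your explicit induction on the decomposition tree is simply the careful version of the same synthesis.
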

\begin{proof}
Direct from Theorem 1.1, Lemma 2.2, and Theorem 2.3.
\end{proof}

We have  a similar result for MIN CUT by using the following lemma \cite{Chaourar 2017}.
\begin{lem}
Given a connected graph $G=(V, E)$ and a positive weight function $w$ defined on $E$. Then any $w$-minimum cut is a connected cut of $G$.
\end{lem}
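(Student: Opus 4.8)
The plan is to argue by contradiction, exploiting the positivity of $w$ together with the connectedness of $G$. Suppose $\delta(U)$ is a $w$-minimum cut but is \emph{not} connected. By the symmetry between the two sides of a cut, I may assume that the induced subgraph $G[U]$ is disconnected, and I pick a connected component $U_1$ of $G[U]$, so that $\emptyset \neq U_1 \subsetneq U$.

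The key structural observation is that, because $U_1$ is a full component of $G[U]$, there are no edges of $G$ joining $U_1$ to $U \setminus U_1$. Hence every edge of $\delta(U)$ has its endpoint lying in $U$ either inside $U_1$ or inside $U \setminus U_1$, and this gives the disjoint decomposition $\delta(U) = \delta(U_1) \sqcup \delta(U \setminus U_1)$. In particular $w(\delta(U)) = w(\delta(U_1)) + w(\delta(U \setminus U_1))$, so the minimum cut splits additively along the disconnection.

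It then remains to check that $\delta(U_1)$ is a genuine cut and that the complementary piece carries strictly positive weight. Since $G$ is connected and $\emptyset \neq U_1 \subsetneq V$, the set $\delta(U_1)$ is nonempty, hence an admissible cut. Likewise $U \setminus U_1$ is nonempty (as $G[U]$ has at least two components) and proper in $V$, so $\delta(U \setminus U_1) \neq \emptyset$; because $w$ is positive this forces $w(\delta(U \setminus U_1)) > 0$. Combining with the decomposition gives $w(\delta(U_1)) = w(\delta(U)) - w(\delta(U \setminus U_1)) < w(\delta(U))$, contradicting the minimality of $\delta(U)$. Therefore $G[U]$ is connected, and applying the same argument to $V \setminus U$ shows $G[V \setminus U]$ is connected too, so the minimum cut is connected.

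The main obstacle, though a mild one, is making sure that the cut $\delta(U_1)$ peeled off is nonempty and proper, i.e.\ that it is an admissible cut of $G$ rather than a degenerate partition: this is exactly where the hypothesis that $G$ is connected (every nonempty proper vertex subset has a nonempty set of leaving edges) and that $w$ is strictly positive (turning the strict size reduction into a strict weight decrease) are both essential. No further machinery is needed.
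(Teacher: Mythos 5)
Your proof is correct: the decomposition $\delta(U)=\delta(U_1)\sqcup\delta(U\setminus U_1)$ is valid precisely because $U_1$ is a full component of $G[U]$, connectedness of $G$ makes both pieces nonempty cuts, positivity of $w$ yields the strict decrease, and symmetry handles $G[V\setminus U]$. Note that this paper states the lemma without proof, importing it from \cite{Chaourar 2017}, so there is no internal proof to compare against; your component-peeling argument is the standard one for this fact and fully establishes it.
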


And we can state a version of Corollary 2.4 for MIN CUT.
\begin{cor}
  There exists a quadratic time algorithm for solving MIN CUT in $\mathcal G(K_5\backslash e)$ without computing the maximum flow.
\end{cor}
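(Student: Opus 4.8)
The plan is to mirror the entire development for CMAX CUT, replacing maximization by minimization, and then to invoke Lemma 2.5 to identify a minimum connected cut with a minimum cut. First I would observe that, by Lemma 2.5, for a positive weight function $w$ every $w$-minimum cut of a connected graph $G$ lies in $\mathcal C(G)$; hence solving MIN CUT is exactly the same as solving the connected minimum cut problem, namely minimizing $w(\Omega)$ over $\Omega\in\mathcal C(G)$. This reduction is the whole point: it replaces the flow-based formulation of MIN CUT by a purely combinatorial optimization over the class of connected cuts.

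Next I would establish the minimization counterparts of Theorem 2.3 and Lemma 2.2. For the base class, Theorem 2.3 already gives $|\mathcal C(G)|\leq n^2$ for every $G\in\mathcal G_0$; since a minimum connected cut is just a minimum-weight element of $\mathcal C(G)$, selecting it takes $O(n^2)$ time, so $\mathcal G_0$ is 2-polynomial for the connected minimum cut problem. For the sum operations I would reuse Lemma 2.1 verbatim: under a 1-sum the connected cuts of $G_1\oplus G_2$ are those of $G_1$ together with those of $G_2$, so the global minimum is the smaller of the two minima; under a 2-sum the weight of a glued cut $\Omega_1\oplus_2\Omega_2$ is additive up to the fixed contribution of the identified (and then deleted) edge, since $w(\Omega_1\oplus_2\Omega_2)=\bigl(w(\Omega_1)-w(e_1)\bigr)+\bigl(w(\Omega_2)-w(e_2)\bigr)$. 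Therefore minimizing over the product class $\mathcal C_{e_1}(G_1)\oplus_2\mathcal C_{e_2}(G_2)$ decomposes into independently finding, in each $G_j$, a minimum connected cut containing $e_j$, and the running-time bookkeeping is identical to that of Lemma 2.2, giving 2-polynomiality preserved under both sums.

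The main result then follows exactly as Corollary 2.4 does: by Theorem 1.1 every $G\in\mathcal G(K_5\setminus e)$ is built from $\mathcal G_0$ by 1-sums and 2-sums, and the two preservation statements propagate 2-polynomiality for the connected minimum cut problem up the decomposition tree. Combining with the first step, MIN CUT is solvable in $O(n^2)$ time on $\mathcal G(K_5\setminus e)$. Finally, since every step of this procedure is a combinatorial enumeration of, and selection from, the connected cuts $\mathcal C(G)$, the algorithm never forms or solves a flow problem, which yields the claimed qualifier ``without computing the maximum flow''.

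The step I expect to be the main obstacle is the forcing used inside the 2-sum case. For CMAX CUT one forces $e_j$ into the optimal cut by raising its weight to the total edge weight; for minimization that reweighting points the wrong way, since making $e_j$ expensive discourages its inclusion rather than forcing it, and forbidding the reweighting keeps us away from an $s$--$t$ formulation that would reintroduce max-flow. I would resolve this not by reweighting but by computing a minimum connected cut containing $e_j$ directly: in the base graphs this is a selection over the quadratically many cuts in $\mathcal C_{e_j}(G_j)$ guaranteed by Theorem 2.3, and in the recursion I would carry ``minimum connected cut through a prescribed edge'' as a primitive alongside ``minimum connected cut'', verifying that both remain 2-polynomial under each of the two sum operations.
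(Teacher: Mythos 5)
Your proposal is correct and takes the same route as the paper, whose entire proof reads: ``Direct from Lemma 2.5 and by adapting the quadratic algorithm of CMAX CUT.'' You in fact go beyond the paper by noticing that the reweighting trick of Lemma 2.2 (raising $w(e_j)$ to the total edge weight) forces an edge into a \emph{maximum} cut but not into a \emph{minimum} one, and your replacement---treating ``minimum connected cut containing a prescribed edge'' as a primitive solved by enumeration over the quadratically bounded classes $\mathcal C_{e_j}(G_j)$ and propagated through the 1-sum/2-sum recursion---is a sound way of making explicit the adaptation the paper leaves implicit.
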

\begin{proof}
Direct from Lemma 2.5 and by adapting the quadratic algorithm of CMAX CUT.
\end{proof}

Note that, according to Proposition 1.2, $|\mathcal C(K_5)|=15\leq 25=n^2$. Thus we can get a larger class of graphs by taking 2-sums of $\mathcal G_0$ and copies of $K_5$ for which CMAX CUT and MIN CUT have quadratic running time complexity.
\newline In the other hand, by using Lemma 2.2 and similar decomposition theorems as for Theorem 1.1, we get linear time algorithms for CMAX CUT and MIN CUT (without computing the maximum flow) in large classes of graphs.
\newline Finally, we can have a quadratic running time complexity for the famous Hamitonian Cycle Problem (HC) in the following class of graphs.
\begin{cor}
  HC has quadratic running time complexity in graphs without the two excluded minors $K_5\backslash e$ and $K_{3, 3}$.
\end{cor}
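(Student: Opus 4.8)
The plan is to combine the structural decomposition of Theorem 1.1 with a Hamiltonicity analogue of Lemma 2.2. First I would observe that the two excluded minors force planarity and pin down the admissible blocks: since $K_5$ contains $K_5\backslash e$ as a minor, excluding $K_5\backslash e$ also excludes $K_5$, so by Kuratowski--Wagner any $G$ in the class is planar; moreover $K_{3,3}$ is the unique non-planar member of $\mathcal{G}_0$, so Theorem 1.1 specializes to the statement that $G$ is obtained by $1$-sums and $2$-sums of blocks from $\mathcal{G}_0' := \{W_n : n\geq 4\}\cup\{P_6, K_3\}$. I would compute such a decomposition tree in $O(n^2)$ time, reusing the decomposition machinery already underlying the CMAX CUT algorithm.

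Next I would handle the two kinds of sums. A graph with a cut vertex has no Hamiltonian cycle (a Hamiltonian cycle is spanning and forces $2$-connectivity), so any proper $1$-sum is immediately rejected; hence I may assume $G$ is $2$-connected and its decomposition is a tree of $2$-sums. For $2$-sums I would prove the Hamiltonicity counterpart of Lemma 2.2: $G_1\oplus_2 G_2$, glued along the edge $e$ with endpoints $u,v$, has a Hamiltonian cycle iff each $G_j\backslash e_j$ has a Hamiltonian $u$-$v$ path. The forward direction is the standard $2$-cut argument: the edges of a Hamiltonian cycle lying on one side of the separation pair $\{u,v\}$ form vertex-disjoint paths whose endpoints lie in $\{u,v\}$, and a degree/parity count forces exactly one such path, joining $u$ to $v$ and covering that side; the converse is concatenation.

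Rooting the $2$-sum tree, I would run a bottom-up dynamic program propagating one Boolean per block: does the block---with each of its \emph{child} virtual edges treated as a forced edge---admit a Hamiltonian path between the endpoints of its \emph{parent} virtual edge that uses all child virtual edges, provided every child has recursively succeeded (at the root the query is a Hamiltonian cycle using all child virtual edges). The base cases are the block queries: for $K_3$ and $P_6$, which have fixed size, brute force is $O(1)$, and for $W_n$ one gives an explicit $O(n)$ characterization of when a Hamiltonian $u$-$v$ path through a prescribed set of spokes and rim edges exists. Each combination step is $O(1)$ given the children's answers, the per-block work sums to $O(n)$, and the decomposition costs $O(n^2)$, yielding the claimed quadratic bound. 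The main obstacle is exactly this block query for large wheels: because a single wheel may carry several virtual edges, the recursive condition is not merely that \emph{a} Hamiltonian path exists but that one compatible with \emph{all} forced virtual edges exists, and correctly formulating this dynamic-programming state and proving the wheel characterization (together with the exactly-one-path parity step) is where the genuine work lies.
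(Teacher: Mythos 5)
Your proposal is correct in outline, but it proves the corollary by a genuinely different route from the paper. The paper's proof is a two-line planar-duality reduction: excluding $K_5\backslash e$ excludes $K_5$, so together with the exclusion of $K_{3,3}$ every graph $G$ in the class is planar, and the edge sets of cycles of a connected planar graph are exactly the bonds, i.e.\ the connected cuts, of its dual $G^*$; hence $G$ is Hamiltonian if and only if the maximum cardinality connected cut of $G^*$ (unit-weight CMAX CUT) has cardinality $n=|V(G)|$, which is decided by running the quadratic algorithm of Corollary 2.4 on $G^*$. You instead build an intrinsic dynamic program over the Wagner decomposition of Theorem 1.1 together with a Hamiltonicity analogue of Lemma 2.2. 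Your rejection of proper 1-sums via 2-connectivity is right; your 2-cut parity argument is sound (each side of a nontrivial 2-sum has an interior vertex, so a Hamiltonian cycle restricts to exactly one spanning $u$--$v$ path on each side, and in particular never uses the glue edge itself); and the wheel query you flag as ``where the genuine work lies'' is in fact routine, since the Hamiltonian cycles of $W_n$ are precisely the rim minus one rim edge $xy$ together with the two spokes $hx$ and $hy$ (where $h$ is the hub), so deciding whether such a cycle contains a prescribed forced edge set is an $O(n)$ scan: at most two forced spokes are allowed, they must go to rim-adjacent vertices, and the rim edge joining those vertices must be unforced.

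The comparison cuts both ways. The paper's argument is far shorter, but it silently requires that $G^*$ again lie in a class where CMAX CUT runs in quadratic time, a point the paper does not verify; your intrinsic approach avoids dualization entirely. Moreover, since $K_{3,3}$, $P_6$ and $K_3$ are constant-size blocks that brute force handles in $O(1)$, your DP never actually uses the exclusion of $K_{3,3}$, and so, once written out, it would establish the stronger statement that HC is quadratic on all of $\mathcal G(K_5\backslash e)$ --- the $K_{3,3}$-exclusion is needed only for the paper's duality argument. Two small items to add in a full write-up: iterated 2-sums on the same block edge create parallel virtual edges, and no Hamiltonian cycle can traverse two forced parallel edges unless the block consists of exactly their two endpoints, so the DP must detect and reject that configuration; and, exactly like the paper, you assume rather than prove that the decomposition tree of Theorem 1.1 is computable within the quadratic budget.
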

\begin{proof}
Since the considered class of graphs is a subclass of both planar graphs and $\mathcal G(K_5\backslash e)$, then deciding if a given graph $G$ from this class contains a Hamiltonian cycle is equivalent to decide if a connected maximum cardinality cut (i.e., CMAX CUT with weights $w(e)=1$ for any edge $e$) of the dual graph $G^*$ has cardinality $n=|V(G)|$.
\end{proof}

This result is interesting because HC is NP-complete in maximal planar graphs \cite{Nishizeki et al. 1983}.

\section{Conclusion}
We have proved that CMAX CUT and MIN CUT have quadratic running time complexity for graphs with the excluded minor $K_5\backslash e$.
Further directions are improving this running time complexity and studying CMAX CUT in larger classes of graphs than $\mathcal G(K_5\backslash e)$.

\begin{thebibliography}{1}

\bibitem{Barahona et al. 1988}
F. Barahona, M. Grötschel, M. Jünger, and G. Reinelt (1988), {\em An application of combinatorial optimization to statistical physics and circuit layout design}, Operations Research 36: 493-513.

\bibitem{Barahona 1990}
F. Barahona, (1990), {\em Planar multicommodity flows, max cut, and the Chinese postman problem}, in: Polyhedral Combinatorics, Proceedings DIMACS Workshop, Morristown, New Jersey, 1989, W. Cook, P.D. Seymour (eds.) [DIMACS Series in Discrete Mathematics and Theoretical Computer Science, Volume 1], American Mathematical Society, Providence, Rhode Island: 189-202.

\bibitem{Ben-Ameur et al. 2014}
W. Ben-Ameur, A. R. Mahjoub, and J. Neto, {\em The maximum cut problem}, in: Paradigms of Combinatorial Optimization, Problems and New Approaches, 2nd edition, J. Wiley and Sons, USA, V. T. Paschos (ed.), 2014.

\bibitem{Bodlaender and Jansen 2000}
H. L. Bodlaender and K. Jansen (2000), {\em On the complexity of the maximum cut problem}, Nordic Journal of Computing 7(1): 14-31.

\bibitem{Bondy and Murty 2008}
J. A. Bondy and U. S. R. Murty (2008), {\em Graph Theory with Applications}, Elsevier, New York.

\bibitem{Carvajal et al. 2013}
R. Carvajal, M. Constantino, M. Goycoolea, J. P. Vielma, and A. Weintraub (2013), {\em Imposing connectivity constraints in forest planning}, Operations Research 61 (4): 824–836.

\bibitem{Chaourar 2017}
Chaourar, B., {\em A linear time algorithm for a variant of max cut in series parallel graphs}, Advances in Operations Research 35 (1D), 29-35, 2010.

\bibitem{Garey et al. 1976}
M. R. Garey, D. S. Johnson, and L. Stockmeyer (1976), {\em Some simplified NP-complete graph problems}, Theoretical Computer Science 1: 237-267.

\bibitem{Grimm et al. 2019}
V. Grimm, T. Kleinert, F. Liers, M. Schmidt, and G. Zöttl (2019), {\em Optimal price zones of electricity markets: a mixed-integer multilevel
model and global solution approaches}, Optimization Methods and Software 34 (2): 406-436.

\bibitem{Guruswami 1999}
V. Guruswami (1999), {\em Maximum cut on line and total graphs}, Discrete Applied Mathematics 92 (2-3): 217-221.

\bibitem{Hadlock 1975}
F. Hadlock (1975), {\em Finding a maximum cut of a planar graph in polynomial time}, SIAM Journal on Computing 4: 221-225.

\bibitem{Haglin and Venkatesan 1991}
D. J. Haglin and S. M. Venkatesan (1991), {\em Approximation and intractability results for the maximum cut problem and its variants}, IEEE Transactions on Computers 40 (1): 110-113.

\bibitem{Hojny and Pfetsch 2018}.
C. Hojny and M. E. Pfetsch (2018), {\em Polytopes associated with symmetry handling}, Mathematical Programming.  \url{https://doi.org/10.1007/s10107-018-1239-7}

\bibitem{Karp 1972}
R. M. Karp (1972), {\em Reducibility among combinatorial problems}, in: Complexity of Computer Computations, Miller and Thatcher, Plenum Press: 85-104.

\bibitem{Kleinert and Schmidt 2018}
T. Kleinert and M. Schmidt (2018), {\em Global optimization of multilevel electricity market models including network design and graph partitioning}, Technical Report \url{http://www.optimization- online.org/DB_HTML/2018/02/6460.html}, FAU Erlangen-Nürnberg.

\bibitem{Liers 2016}
F. Liers, A. Martin, and S. Pape (2016), {\em Binary Steiner trees: structural results and an exact solution approach}, Discrete Optimization
21: 85-117.

\bibitem{Nishizeki et al. 1983}
T. Nishizeki, T. Asano, and T. Watanabe (1983), {\em An approximation algorithm for the Hamiltonian walk problem on maximal planar graphs}, Discrete Applied Mathematics 5 (2): 211-222.

\bibitem{Orlova and Dorfman 1972}
G. I. Orlova and Y. G. Dorfman (1972), {\em Finding the maximal cut in a graph}, Engineering Cybernetics 10 (3): 502-506.

\bibitem{Poljak and Tuza 1995}
Poljak and Tuza (1995), {\em The max-cut problem -- a survey}, in: Special Year on Combinatorial Optimization, W. Cook, L. Lovasz and P. Seymour (eds.), DIMACS series in Discrete Mathematics and Theoretical Computer Science, American Mathematical Society, 1995.

\bibitem{Shih et al. 1990}
W.-K. Shih, S. Wu, and Y. S. Kuo (1990), {\em Unifying maximum cut and minimum cut of a planar graph}, IEEE Transactions on Computers 39 (5): 694–697.

\bibitem{Terebenkov 1991}
A. P. Terebenkov (1991), {\em NP-completeness of maximum-cut and cycle-covering problems for a planar graph}, Cybernetics and Systems Analysis 27 (1): 16-20.

\bibitem{Vicente et al. 2008}
S. Vicente, V. Kolmogorov, and C. Rother (2008), {\em Graph cut based image segmentation with connectivity priors}, Computer vision and pattern
recognition CVPR 2008 (IEEE conference on CVPR): 1-8.

\bibitem{Wagner 1960}
K. Wagner (1960), {\em Bemerkungen zu Hadwigers Vermutung}, Mathematische Annalen 141: 433-451.


\end{thebibliography}
\end{document}